\newtheorem{theorem}{Theorem}
\newtheorem{example}{Example}
\newtheorem{definition}{Definition}
\def\psfancypar#1#2{\begingroup\def\par{\endgraf\endgroup\lineskiplimit=0pt}
               \setbox2=\hbox{\large\sc #2}
               \newdimen\tmpht \tmpht \ht2 \advance\tmpht by \baselineskip
               \font\hhuge=Times-Bold at \tmpht
               \setbox1=\hbox{{\hhuge #1}}
               \count7=\tmpht \count8=\ht1
               \divide\count8 by 1000 \divide\count7 by \count8 
               \tmpht=.001\tmpht\multiply\tmpht by \count7 
               \font\hhuge=Times-Bold at \tmpht
               \setbox1=\hbox{{\hhuge #1}}
               \noindent
                \hangindent1.05\wd1
               \hangafter=-2 {\hskip-\hangindent
               \lower1\ht1\hbox{\raise1.0\ht2\copy1}%
                \kern-0\wd1}\copy2\lineskiplimit=-1000pt}
\newcommand{\beq}{\begin{equation}}
\newcommand{\eeq}{\end{equation}}
\newcommand{\bqa}{\begin{eqnarray}}
\newcommand{\eqa}{\end{eqnarray}}
\newcommand{\bqn}{\begin{eqnarray*}}
\newcommand{\eqn}{\end{eqnarray*}}
\newcommand{\nn}{\nonumber}
\newcommand{\be}{\begin{enumerate}}
\newcommand{\ee}{\end{enumerate}}
\newcommand{\bi}{\begin{itemize}}
\newcommand{\ei}{\end{itemize}}
\newcommand{\bd}{\begin{description}}
\newcommand{\ed}{\end{description}}
\newcommand{\ba}{\begin{array}}
\newcommand{\ea}{\end{array}}
\newcommand{\bde}{\begin{definition}}
\newcommand{\ede}{\end{definition}}
\newcommand{\bex}{\begin{example}}
\newcommand{\eex}{\end{example}}
\def\boxit#1{\vbox{\hrule\hbox{\vrule\kern3pt
        \vbox{\kern3pt#1\kern3pt}\kern3pt\vrule}\hrule}}
\def\reals{ { {\rm  I \kern-0.15em R }  } }
\def\complex{ {\,{{\rm C} \kern-0.50em \raise0.20ex {  |}}\, }}
\def\0bf{{\bf 0}}
\def\1bf{{\bf 1}}
\def\2bf{{\bf 2}}
\def\3bf{{\bf 3}}
\def\4bf{{\bf 4}}
\def\5bf{{\bf 5}}
\def\6bf{{\bf 6}}
\def\7bf{{\bf 7}}
\def\8bf{{\bf 8}}
\def\9bf{{\bf 9}}
\def\Ibf{{\bf I}}
\def\Rbf{{\bf R}}
\def\Emat{\mathcal{E}}
\def\Rxx{\Rbf_{\ssstyle X\kern-.1em X}}
\let\ssstyle=\scriptscriptstyle
\def\Kout{\setbox1=\hbox{\Huge\bf K}\hbox to
1.05\wd1{\hspace{.05\wd1}
\def\Sout{\setbox1=\hbox{\Huge\bf S}\hbox to 1.05\wd1{\hspace{.05\wd1}

\allowdisplaybreaks[1]
\begin{document}

\title{New Outer Bounds on the Capacity Region of Gaussian Interference Channels }

\author{\authorblockN{Xiaohu Shang}
\authorblockA{Syracuse University\\
Department of EECS\\
Email: xshang@syr.edu} \and
\authorblockN{Gerhard Kramer}
\authorblockA{Bell Labs\\
Alcatel-Lucent\\
Email: gkr@research.bell-labs.com} \and
\authorblockN{Biao Chen}
\authorblockA{Syracuse University\\
Department of EECS\\
Email: bichen@ecs.syr.edu}}\maketitle

\begin{abstract}
Recent outer bounds on the capacity region of Gaussian
interference channels are generalized to $m$-user channels with
$m>2$ and asymmetric powers and crosstalk coefficients. The bounds
are again shown to give the sum-rate capacity for Gaussian
interference channels with low powers and crosstalk coefficients.
The capacity is achieved by using single-user detection at each
receiver, i.e., treating the interference as noise incurs no loss
in performance.

\end{abstract}
 \noindent {\em Index terms} --- capacity, Gaussian noise, interference.
 \maketitle

\section{Introduction}

This paper extends the results of \cite{Shang-etal:08CISS} to
asymmetric Gaussian ICs. The paper further has a new Theorem
(Theorem \ref{thm:mIC}) that is not in
\cite{Shang-etal:06IT_submission} or in other recent works (see
Section V and
 Motahari and Khandani \cite{Motahari&Khandani:08IT_submission}, and Annapureddy and Veeravalli \cite{Annapureddy&Veeravalli:08ITA}).

The interference channel (IC) models communication systems where
transmitters communicate with their respective receivers while
causing interference to all other receivers. For a two-user
Gaussian IC, the channel output can be written in the standard
form \cite{Carleial:78IT} \bqn
Y_1&=&X_1+\sqrt{a}X_2+Z_1,\\Y_2&=&\sqrt{b}X_1+X_2+Z_2,\eqn where
$\sqrt{a}$ and $\sqrt{b}$ are channel coefficients, $X_i$ and
$Y_i$ are the transmit and receive signals. The user/channel input
sequence $X_{i1},X_{i2},\cdots,X_{in}$ is subject to the power
constraint $\sum_{j=1}^n\Emat(X^2_{ij})\leq nP_i$, $i=1,2$. The
transmitted signals $X_1$ and $X_2$ are statistically independent.
The channel noises $Z_1$ and $Z_2$ are possibly correlated unit
variance Gaussian random variables, and $(Z_1,Z_2)$ is
statistically independent of $(X_1,X_2)$. In the following, we
denote this Gaussian IC as IC$(a,b,P_1,P_2)$.

The capacity region of an IC is defined as the closure of the set
of rate pairs $(R_1,R_2)$ for which both receivers can decode
their own messages with arbitrarily small positive error
probability. The capacity region of a Gaussian IC is known only
for three cases: (1) $a=0$, $b=0$. (2) $a\geq 1$, $b\geq 1$: see
\cite{Carleial:75IT,Sato:81IT,Han&Kobayashi:81IT}. (3) $a=0$,
$b\geq 1$; or $a\geq 1$, $b=0$: see \cite{Costa:85IT}. For the
second case both receivers can decode the messages of both
transmitters. Thus this IC acts as two multiple access channels
(MACs), and the capacity region for the IC is the intersection of
the capacity region of the two MACs. However, when the
interference is weak or moderate, the capacity region is still
unknown. The best inner bound is obtained in
\cite{Han&Kobayashi:81IT} by using superposition coding and joint
decoding. A simplified form of the Han-Kobayashi region was given
by Chong-Motani-Garg-El Gamal \cite{Chong-etal:06IT_submission},
\cite{Kramer:06Zurich}. Various outer bounds have been developed
in
\cite{Sato:77IT,Carleial:83IT,Kramer:04IT,Etkin-etal:07IT_submission,Telatar&Tse:07ISIT}.
Kramer in \cite{Kramer:04IT} presented two outer bounds. The first
is obtained by providing each receiver with just enough
information to decode both messages. The second is obtained by
reducing the IC to a degraded broadcast channel. Both bounds
dominate the bounds by Sato \cite{Sato:77IT} and Carleial
\cite{Carleial:83IT}. The recent outer bounds by Etkin, Tse, and
Wang in \cite{Etkin-etal:07IT_submission} are also based on
genie-aided methods, and they show that Han and Kobayashi's inner
bound is within one bit or a factor of two of the capacity region.
This result can also be established by the methods of Telatar and
Tse \cite{Telatar&Tse:07ISIT}. We remark that neither of the
bounds of \cite{Kramer:04IT} and \cite{Etkin-etal:07IT_submission}
implies each other. Numerical results show that the bounds of
\cite{Kramer:04IT} are better at low SNR while those of
\cite{Etkin-etal:07IT_submission} are better at high SNR. The
bounds of \cite{Telatar&Tse:07ISIT} are not amenable to numerical
evaluation since the optimal distributions of the auxiliary random
variables are unknown.

In this paper,  we present new outer bounds on the capacity region
of Gaussian ICs that generalize results of
\cite{Shang-etal:08CISS}.  The new bound is based on a genie-aided
approach and an extremal inequality proposed by Liu and Viswanath
\cite{Liu&Viswanath:06IT}. Based on this outer bound, we obtain
new sum-rate capacity results
 for ICs satisfying some channel
coefficient and power constraint conditions. We show that the
sum-rate capacity can be achieved by treating the interference as
noise when both the channel gain and the power constraint are
weak. We say that such channels have {\em noisy interference}. For
this class of interference, the simple single-user transmission
and detection strategy is sum-rate optimal.

This paper is organized as follows. In Section II, we present an
outer bound and the resulting sum-rate capacity for certain
$2$-user Gaussian ICs. An extension of the sum-rate capacity under
noisy interference to $m$-user ICs is provided in Section III.
Numerical examples are given in Section IV, and Section V
concludes the paper.

\section{A Genie-aided Outer Bound}
\subsection{General outer bound}
The following is a new outer bound on the capacity region of
Gaussian ICs. Note that in contrast to \cite{Shang-etal:08CISS}
these bounds permit $P_1\neq P_2$ and $a\neq b$.
\begin{theorem}
If the rates $(R_1,R_2)$ are achievable for IC$(a,b,P_1,P_2)$ with
$0<a<1,0<b<1$, they must satisfy the following constraints
(\ref{eq:constraint1})-(\ref{eq:constraint3}) for $\mu>0$,
$\frac{1+bP_1}{b+bP_1}\leq\eta_1\leq\frac{1}{b}$ and
$a\leq\eta_2\leq\frac{a+aP_2}{1+aP_2}$, where
\begin{figure*}
\bqa R_1+\mu
R_2&\leq&\min_{\substack{\rho_i\in[0,1]\\\left(\sigma_1^2,\sigma_2^2\right)\in\Sigma}}\frac{1}{2}\log\left(1+\frac{P_1^*}{\sigma_1^2}\right)-\frac{1}{2}\log\left(aP_2^*+1-\rho_1^2\right)+\frac{1}{2}\log\left(1+P_1+aP_2-\frac{(P_1+\rho_1\sigma_1)^2}{P_1+\sigma_1^2}\right)\nn\\
&&\hspace{.2in}+\frac{\mu}{2}\log\left(1+\frac{P_2^*}{\sigma_2^2}\right)-\frac{\mu}{2}\log\left(bP_1^*+1-\rho_2^2\right)+\frac{\mu}{2}\log\left(1+P_2+bP_1-\frac{(P_2+\rho_2\sigma_2)^2}{P_2+\sigma_2^2}\right)\label{eq:constraint1}\\
R_1+\eta_1R_2&\leq&\frac{1}{2}\log\left(1+\frac{b\eta_1-1}{b-b\eta_1}\right)-\frac{\eta_1}{2}\log\left(1+\frac{b\eta_1-1}{1-\eta_1}\right)+\frac{\eta_1}{2}\log\left(1+bP_1+P_2\right)\label{eq:constraint2}\\
R_1+\eta_2R_2&\leq&\frac{1}{2}\log\left(1+P_1+aP_2\right)-\frac{1}{2}\log\left(1+\frac{a-\eta_2}{\eta_2-1}\right)+\frac{\eta_2}{2}\log\left(1+\frac{a-\eta_2}{a\eta_2-a}\right).\label{eq:constraint3}\eqa
 \noindent \rule{\textwidth}{.02cm}
\vspace{-.3in}
\end{figure*}

 \bqa\Sigma=\left\{\begin{array}{ll}
\left\{\left(\sigma_1^2,\sigma_2^2\right)\left.|\sigma_1^2>0,0<\sigma_2^2\leq\frac{1-\rho_1^2}{a}\right.\right\},\textrm{ if }\mu\geq 1, \\
\left\{\left(\sigma_1^2,\sigma_2^2\right)\left.| 0<\sigma_1^2\leq\frac{1-\rho_2^2}{b},\sigma_2^2>0\right.\right\},\textrm{ if }\mu<1, \\
 \end{array}\right.\eqa and if $\mu\geq 1$ we define
\bqa
&&\hspace{-.3in}P_1^*=\left\{\begin{array}{ll}P_1,&\sigma_1^2\leq\left[\frac{(1-\mu)P_1}{\mu}+\frac{1-\rho_2^2}{b\mu}\right]^+,\\
\frac{1-\rho_2^2-b\mu\sigma_1^2}{b\mu-b},&
\left[\frac{(1-\mu)P_1}{\mu}+\frac{1-\rho_2^2}{b\mu}\right]^+<\sigma_1^2\leq\frac{1-\rho_2^2}{b\mu},\\
0,&\sigma_1^2>\frac{1-\rho_2^2}{b\mu},\end{array}\right.\label{eq:P1muL}\\
&&\hspace{-.3in}P_2^*=P_2,\qquad\qquad\qquad
\sigma_2^2\leq\frac{1-\rho_1^2}{a}, \label{eq:P2muL}\eqa where
$(x)^+\triangleq\max\{x,0\}$, and if $0<\mu<1$ we define\bqa
&&\hspace{-.3in}P_1^*=P_1,\qquad\qquad\qquad\quad
\sigma_1^2\leq\frac{1-\rho_2^2}{b},\label{eq:P1muS}\\
&&\hspace{-.3in}P_2^*=\left\{\begin{array}{ll}P_2,&\sigma_2^2\leq\left[\left(\mu-1\right)P_2+\frac{\mu\left(1-\rho_1^2\right)}{a}\right]^+,\\
\frac{\mu\left(1-\rho_1^2\right)-a\sigma_2^2}{a-a\mu},&
\hspace{-.1in}\begin{array}{l}
  \left[\left(\mu-1\right)P_2+\frac{\mu\left(1-\rho_1^2\right)}{a}\right]^+ \\
  \hspace{.8in}<\sigma_2^2\leq\frac{\mu\left(1-\rho_1^2\right)}{a}, \\
\end{array}\\
0,&\sigma_2^2>\frac{\mu\left(1-\rho_1^2\right)}{a}.\end{array}\right.\label{eq:P2muS}
 \eqa\label{thm:region}
\end{theorem}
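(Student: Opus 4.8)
The plan is to derive each of the three families of inequalities by a genie-aided argument combined with the vector extremal inequality of Liu and Viswanath \cite{Liu&Viswanath:06IT}. For \eqref{eq:constraint1}, I would give receiver~1 the side information $S_1 = \sqrt{a}\,X_2 + N_1$ and receiver~2 the side information $S_2 = \sqrt{b}\,X_1 + N_2$, where $N_1,N_2$ are Gaussian with variances $\sigma_1^2,\sigma_2^2$ jointly independent of everything else and correlated with $Z_1,Z_2$ through parameters $\rho_1,\rho_2$; the admissible set $\Sigma$ and the correlation ranges are exactly what make the resulting "noisy-interference" genie both useful and a valid outer bound. Using Fano's inequality, $n(R_1+\mu R_2)$ is bounded by a weighted sum of $I(X_1;Y_1,S_1)$ and $I(X_2;Y_2,S_2)$; I would then expand each mutual information via the chain rule into a part that depends on the genie alone ($h(S_i)-h(S_i\mid X_i)$, giving the $-\tfrac12\log(aP_2^*+1-\rho_1^2)$ and $-\tfrac{\mu}{2}\log(bP_1^*+1-\rho_2^2)$ terms after optimizing the input power as in \eqref{eq:P1muL}--\eqref{eq:P2muS}) and a conditional differential-entropy difference of the form $h(Y_i\mid S_i)-h(Y_i\mid X_i,S_i)$. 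The two leftover differential entropies across the two users are coupled only through the marginal covariance constraints on $X_1,X_2$, so bounding their weighted combination is precisely a two-letter vector Gaussian extremal inequality; invoking \cite{Liu&Viswanath:06IT} shows a Gaussian input is optimal and yields the remaining $\log$ terms, after which one minimizes over $\rho_i$ and $(\sigma_1^2,\sigma_2^2)\in\Sigma$.

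For \eqref{eq:constraint2} and \eqref{eq:constraint3} I would instead use the "decode-both-messages" style genie of the Kramer/ETW type. For \eqref{eq:constraint3}, give receiver~1 the signal $Y_2' = \sqrt{b/a}\,(Y_1 - Z_1) + Z_2'$ scaled so that it is statistically equivalent to a degraded version of $Y_2$; with $\eta_2$ in the stated interval one can choose the auxiliary noise variance so that $\sqrt{a}X_2+Z_1$ is a degraded version of $X_2+Z_2$-type signal, allowing the bound $nR_1 \le I(X_1;Y_1)$ and $n\eta_2 R_2 \le \eta_2 I(X_2; Y_2, X_1)$ to be combined. The weighted sum telescopes to $\tfrac12\log(1+P_1+aP_2)$ minus a worst-case entropy term; maximizing the remaining free differential entropy subject to the degradedness and power constraints is a scalar extremal step that produces the two correction terms in \eqref{eq:constraint3}. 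Inequality \eqref{eq:constraint2} is the mirror image with the roles of the two users and of $a,b$ interchanged, and the condition $\frac{1+bP_1}{b+bP_1}\le\eta_1\le\frac1b$ is exactly the range for which the required degradedness holds.

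The main obstacle is the power optimization hidden in the definitions \eqref{eq:P1muL}--\eqref{eq:P2muS}: after applying the extremal inequality the bound for \eqref{eq:constraint1} is a function of the input covariances that is \emph{not} monotone in $P_i$, so one cannot simply set the transmit power to its maximum. One must instead solve, for each fixed $(\rho_i,\sigma_i^2,\mu)$, the one-dimensional maximization over $0\le p_i\le P_i$ of the relevant concave-minus-concave expression; the three-way split in $P_1^*$ (full power, an interior stationary point, or zero) is the KKT solution of that problem, and checking that the interior branch is indeed the maximizer on the stated $\sigma_i^2$-interval — and that the pieces match up at the breakpoints — is the delicate computational heart of the argument. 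A secondary technical point is verifying that the chosen genie covariances lie in $\Sigma$ so that the single-letter expressions one writes down are genuine upper bounds and not merely formal; this is where the constraint $0<a<1$, $0<b<1$ enters, guaranteeing $(1-\rho_i^2)/a$ and $(1-\rho_i^2)/b$ are meaningful positive thresholds.
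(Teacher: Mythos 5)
Your overall machinery (Fano's inequality plus a genie, then the Liu--Viswanath extremal inequality, then a constrained power optimization that produces \eqref{eq:P1muL}--\eqref{eq:P2muS}) is the right one, and your treatment of \eqref{eq:constraint2}--\eqref{eq:constraint3} --- a perfect genie $X_2$ to receiver 1 (resp.\ $X_1$ to receiver 2), reducing the channel to a Z-type/degraded channel and then a scalar extremal step, with the stated weight ranges guaranteeing the optimizing power lies in $[0,P_i]$ --- matches the paper's proof. Your reading of the three-branch definition of $P_1^*$ as the KKT solution of $\max_{0\le p\le P_1}\bigl[\tfrac12\log(p+\sigma_1^2)-\tfrac{\mu}{2}\log(bp+1-\rho_2^2)\bigr]$ is also correct.

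The gap is in the genie for \eqref{eq:constraint1}. You give receiver 1 the signal $S_1=\sqrt{a}\,X_2+N_1$, a noisy version of the \emph{interference}; the paper's genie is $S_1=X_1^n+N_1^n$, a noisy version of receiver 1's \emph{own} signal, with $\Emat(N_1Z_1)=\rho_1\sigma_1$, and the distinction is not cosmetic. With your $S_1$ one has $I(X_1^n;S_1^n)=0$ because $X_1$ is independent of $(X_2,N_1)$, so the decomposition you describe --- ``$h(S_i)-h(S_i\mid X_i)$ giving the $-\tfrac12\log(aP_2^*+1-\rho_1^2)$ term'' --- collapses to zero; more importantly, no term of the form $h(X_1^n+N_1^n)$ is ever generated. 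That term is exactly what must be paired with $-\mu\,h(\sqrt{b}X_1^n+Z_2^n\mid N_2^n)$ (which arises as $-\mu h(Y_2^n\mid X_2^n,N_2^n)$ from the \emph{other} receiver's expansion) to form the difference $h(X_1^n+N_1^n)-\mu\, h(\sqrt{b}X_1^n+Z_2^n\mid N_2^n)$ to which the extremal inequality applies, yielding the $\tfrac12\log(1+P_1^*/\sigma_1^2)$ and $-\tfrac{\mu}{2}\log(bP_1^*+1-\rho_2^2)$ terms of \eqref{eq:constraint1}. With your interference genie the leftover conditional entropy is instead $h(\sqrt{a}X_2^n+Z_1^n\mid \sqrt{a}X_2^n+N_1^n)$, which is not of extremal-inequality form and leads to a different (ETW-style) bound, not \eqref{eq:constraint1}. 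Replacing the genie by $X_i^n+N_i^n$ at receiver $i$, and correcting the term attribution accordingly ($-\tfrac12\log(aP_2^*+1-\rho_1^2)$ comes from $-h(\sqrt{a}X_2^n+Z_1^n\mid N_1^n)$, while $h(Y_i^n\mid X_i^n+N_i^n)$ gives the third and sixth logarithms), the rest of your argument goes through as in the paper.
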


\begin{proof} We give a sketch of the proof. A genie provides the two receivers with side information $X_1+N_1$ and $X_2+N_2$
respectively, where $N_i$ is Gaussian distributed with variance
$\sigma_i^2$ and $\Emat(N_iZ_i)=\rho_i\sigma_i$, $i=1,2$. Starting
from Fano's inequality, we have\bqa &&n(R_1+\mu
R_2)-n\epsilon\nn\\
&&\leq I\left(X_1^n;Y_1^n\right)+\mu
I\left(X_2^n;Y_2^n\right)\nn\\
&&\leq I\left(X_1^n;Y_1^n,X_1^n+N_1^n\right)+\mu
I\left(X_2^n;Y_2^n,X_2^n+N_2^n\right)\nn\\
&&=\left[h\left(X_1^n+N_1^n\right)-\mu
h\left(\sqrt{b}X_1^n+Z_2^n|N_2^n\right)\right]-h\left(N_1^n\right)\nn\\
&&\hspace{.2in}+\left[\mu
h\left(X_2^n+N_2^n\right)-h\left(\sqrt{a}X_2^n+Z_1^n|N_1^n\right)\right]-\mu h\left(N_2^n\right)\nn\\
&&\hspace{.2in}+h\left(Y_1^n|X_1^n+N_1^n\right)+\mu
h\left(Y_2^n|X_2^n+N_2^n\right),\label{eq:weightedsum}\eqa where
$\epsilon\rightarrow 0$ as $n\rightarrow\infty$. For
$h\left(Y_1^n|X_1^n+N_1^n\right)$, zero-mean Gaussian $X_1^n$ and
$X_2^n$  with covariance matrices $P_1\Ibf$ and $P_2\Ibf$ are
optimal, and we have \bqa
&&\hspace{-.3in}h\left(Y_1^n|X_1^n+N_1^n\right)\nn\\
&&\hspace{-.3in}\leq\frac{n}{2}\log\left[2\pi
e\left(1+aP_2+P_1-\frac{\left(P_1+\rho_1\sigma_1\right)^2}{P_1+\sigma_1^2}\right)\right].\label{eq:eq1}\eqa
From the extremal inequality introduced in \cite[Theorem 1,
Corollary 4]{Liu&Viswanath:06IT}, we have \bqa
&&\hspace{-.2in}h\left(X_1^n+N_1^n\right)-\mu
h\left(\sqrt{b}X_1^n+Z_2^n|N_2^n\right)\label{eq:firstCombine}\\
&&\hspace{-.2in}\leq\frac{n}{2}\log\left[2\pi
e\left(P_1^*+\sigma_1^2\right)\right]-\frac{n\mu}{2}\log\left[2\pi
e\left(bP_1^*+1-\rho_2^2\right)\right],\nn\eqa and \bqa
&&\hspace{-.2in}\mu
h\left(X_2^n+N_2^n\right)-h\left(\sqrt{a}X_2^n+Z_1^n|N_1^n\right)\label{eq:secondCombine}\\
&&\hspace{-.2in}\leq\frac{n\mu}{2}\log\left[2\pi
e\left(P_2^*+\sigma_2^2\right)\right]-\frac{\mu}{2}\log\left[2\pi
e\left(aP_2^*+1-\rho_1^2\right)\right],\nn\eqa where equalities
hold when $X_1^n$ and $X_2^n$ are both Gaussian with covariance
matrices $P_1^*\Ibf$ and $P_2^*\Ibf$ respectively. From
(\ref{eq:weightedsum})-(\ref{eq:secondCombine}) we obtain the
outer bound (\ref{eq:constraint1}).

The outer bound in (\ref{eq:constraint2}) (resp.
(\ref{eq:constraint3})) is obtained by letting the genie provide
side information $X_2$ to receiver one (resp. $X_1$ to receiver
two), and applying the extremely inequality, i.e., \bqa
&&n(R_1+\eta_1R_2)-n\epsilon\nn\\
&&\leq I\left(X_1^n;Y_1^n,X_2^n\right)+\eta_1I\left(X_2^n;Y_2^n\right)\nn\\
&&=h\left(X_1^n+Z_1^n\right)-\eta_1h\left(\sqrt{b}X_1^n+Z_2^n\right)-h\left(Z_1^n\right)\nn\\
&&\hspace{.15in}+\eta_1h\left(Y_2^n\right)\nn\\
&&\leq\frac{n}{2}\log\left(\tilde P_1+1\right)-\frac{n\eta_1}{2}\log\left(b\tilde P_1+1\right)\nn\\
&&\hspace{.15in}+\frac{n\eta_1}{2}\log\left(1+bP_1+P_2\right),\label{eq:BCbound}\eqa
where $\tilde P_1=\frac{b\eta_1-1}{b-b\eta_1}$ for
$\frac{1+bP_1}{b+bP_1}\leq\eta_1\leq\frac{1}{b}$. This is the
bound in (\ref{eq:constraint2}). Similarly, we obtain bound
(\ref{eq:constraint3}).
\end{proof}

 Remark 1: The bounds
(\ref{eq:constraint1})-(\ref{eq:constraint3}) are obtained by
providing different genie-aided signals to the receivers. There is
overlap of the range of $\mu$, $\eta_1$, and $\eta_2$, and none of
the bounds uniformly dominates the other two bounds.

Remark 2: Equations (\ref{eq:constraint2}) and
(\ref{eq:constraint3}) are outer bounds for the capacity region of
a Z-IC, and a Z-IC is equivalent to a degraded IC
\cite{Costa:85IT}. For such channels, it can be shown that
(\ref{eq:constraint2}) and (\ref{eq:constraint3}) are the same as
the outer bounds in \cite{Sato:78IT}. For
$\eta_1=\frac{1+bP_1}{b+bP_1}$ and $\eta_2=\frac{a+aP_2}{1+aP_2}$,
the bounds in (\ref{eq:constraint2}) and (\ref{eq:constraint3})
are tight for a Z-IC (or degraded IC) because $\tilde
P_1=P_1,\tilde P_2=P_2$ in (\ref{eq:BCbound}), and there is no
power sharing between the transmitters. Consequently,
$\frac{1+bP_1}{b+bP_1}$ and $\frac{a+aP_2}{1+aP_2}$ are the
negative slopes of the tangent lines for the capacity region at
the corner points.

Remark 3: The bounds in
(\ref{eq:constraint2})-(\ref{eq:constraint3}) turn out to be the
same as the bounds in \cite[Theorem 2]{Kramer:04IT}. This can be
shown by rewriting the bounds in \cite[Theorem 2]{Kramer:04IT} in
the form of a weighted sum rate.

Remark 4: The bounds in \cite[Theorem 2]{Kramer:04IT} are obtained
by getting rid of one of the interference links to reduce the IC
into a Z-IC. In addition, the proof in \cite{Kramer:04IT} allowed
the transmitters to share their power, which further reduces the
Z-IC into a degraded broadcast channel. Then the capacity region
of this degraded broadcast channel is an outer bound for the
capacity region of the original IC. The bounds in
(\ref{eq:constraint2}) and (\ref{eq:constraint3}) are also
obtained by reducing the IC to a Z-IC. Although we do not
explicitly allow the transmitters to share their power, it is
interesting that these bounds are equivalent to the bounds in
\cite[Theorem 2]{Kramer:04IT} with power sharing. In fact, a
careful examination of our derivation reveals that power sharing
is implicitly assumed. For example, for the term
$h\left(X_1^n+Z_1^n\right)-\eta_1h\left(\sqrt{b}X_1^n+Z_2^n\right)$
of (\ref{eq:BCbound}), user $1$ uses power $\tilde
P_1=\frac{b\eta_1-1}{b-b\eta_1}\leq P_1$ , while for the term
$\eta_1h\left(Y_2^n\right)$ user $1$ uses all the power $P_1$.
This is equivalent to letting user $1$ use the power $\tilde P_1$
for both terms, and letting user $2$ use a power that exceeds
$P_2$.

Remark 5: Theorem \ref{thm:region} improves \cite[Theorem
3]{Etkin-etal:07IT_submission}. Specifically, the bound in
(\ref{eq:constraint2}) is tighter than the first sum-rate bound of
\cite[Theorem 3]{Etkin-etal:07IT_submission}. Similarly, the bound
in (\ref{eq:constraint3}) is tighter than the second sum-rate
bound of \cite[Theorem 3]{Etkin-etal:07IT_submission}. The third
sum-rate bound in \cite[Theorem 3]{Etkin-etal:07IT_submission} is
a special case of (\ref{eq:constraint1}).

Remark 6: Our outer bound is not always tighter than that of
\cite{Etkin-etal:07IT_submission} for all rate points. The reason
is that in \cite[last two equations of
(39)]{Etkin-etal:07IT_submission}, different genie-aided signals
are provided to the same receiver. Our outer bound can also be
improved in a similar and more general way by providing different
genie-aided signals to the receivers. Specifically the starting
point of the bound can be modified to be \bqa n\left(R_1+\mu
R_2\right)&\leq&\sum_{i=1}^k\lambda_iI\left(X_1^n;Y_1^n,U_i\right)\nn\\&&+\sum_{j=1}^m\mu_iI\left(X_2^n;Y_2^n,W_j\right)+n\epsilon,\label{eq:extension}\eqa
where
$\sum_{i=1}^k\lambda_i=1,\sum_{j=1}^m\mu_j=\mu,\lambda_i>0,\mu_j>0$.

\subsection{Sum-rate capacity for noisy interference}

The outer bound in Theorem \ref{thm:region} is in the form of an
optimization problem. Four parameters
$\rho_1,\rho_2,\sigma_1^2,\sigma_2^2$ need to be optimized for
different choices of the weights $\mu,\eta_1,\eta_2$. When
$\mu=1$, the bound (\ref{eq:constraint1}) of Theorem
\ref{thm:region} leads directly to the following sum-rate capacity
result.

\begin{theorem}
For the IC$(a,b,P_1,P_2)$ satisfying \bqa
\sqrt{a}(bP_1+1)+\sqrt{b}(aP_2+1)\leq 1,\label{eq:power}\eqa
 the sum-rate capacity is \bqa
C=\frac{1}{2}\log\left(1+\frac{P_1}{1+aP_2}\right)+\frac{1}{2}\log\left(1+\frac{P_2}{1+bP_1}\right).\label{eq:sumcapacity}\eqa
\label{thm:sumcapacity}\end{theorem}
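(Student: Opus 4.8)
The plan is to prove achievability and the converse separately. Achievability is immediate: the right-hand side of (\ref{eq:sumcapacity}) is exactly the sum rate attained by treating the interference as noise, so one uses independent i.i.d.\ Gaussian codebooks of powers $P_1$ and $P_2$ and lets receiver $i$ decode $X_i$ over the point-to-point Gaussian channel whose additive noise is the (Gaussian, since $X_j$ is Gaussian) interference-plus-noise term of variance $1+aP_2$, respectively $1+bP_1$; this yields the two rates in (\ref{eq:sumcapacity}), whose sum is $C$. The real work is the matching converse, and I would extract it from Theorem \ref{thm:region} at the single weight $\mu=1$.

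For the converse I would first note that, since $P_1,P_2>0$, condition (\ref{eq:power}) forces $0<a<1$ and $0<b<1$ (from $\sqrt a\le\sqrt a(bP_1+1)\le 1$, and symmetrically for $b$), so Theorem \ref{thm:region} applies. Setting $\mu=1$ in (\ref{eq:P1muL})--(\ref{eq:P2muS}) makes the middle branch of $P_1^\ast$ degenerate: one gets $P_1^\ast=P_1$ exactly when $\sigma_1^2\le(1-\rho_2^2)/b$ and $P_2^\ast=P_2$ exactly when $\sigma_2^2\le(1-\rho_1^2)/a$, while $\Sigma$ reduces to the single constraint $\sigma_2^2\le(1-\rho_1^2)/a$. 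On this region the right-hand side of (\ref{eq:constraint1}) separates as $f_1(\sigma_1^2,\rho_1)+f_2(\sigma_2^2,\rho_2)$, where $f_1$ involves only $(P_1,aP_2,\sigma_1^2,\rho_1)$ and $f_2$ only $(P_2,bP_1,\sigma_2^2,\rho_2)$. Clearing logarithms, a direct computation should reduce the identity $f_1(\sigma_1^2,\rho_1)=\tfrac12\log\!\big(1+\tfrac{P_1}{1+aP_2}\big)$ to $\big(\rho_1\sigma_1-(1+aP_2)\big)^2=0$ (indeed one expects $f_1\ge\tfrac12\log(1+P_1/(1+aP_2))$ in general, with equality here); so the intended choice is $\rho_1\sigma_1=1+aP_2$, and symmetrically $\rho_2\sigma_2=1+bP_1$, which makes $f_1+f_2$ equal to the right-hand side of (\ref{eq:sumcapacity}).

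It then remains to check that such parameters are admissible, and this is where (\ref{eq:power}) enters. Writing $x=\sigma_1^2$, $y=\sigma_2^2$, the requirements $\rho_1,\rho_2\in[0,1]$, $(\sigma_1^2,\sigma_2^2)\in\Sigma$, $P_1^\ast=P_1$, $P_2^\ast=P_2$ collapse to the two inequalities $x(1-ay)\ge(1+aP_2)^2$ and $y(1-bx)\ge(1+bP_1)^2$ (which, since necessarily $0<1-ay<1$ and $0<1-bx<1$, already force $x>(1+aP_2)^2$ and $y>(1+bP_1)^2$, hence $\rho_1,\rho_2\in(0,1)$). I would take the second inequality with equality, $y=(1+bP_1)^2/(1-bx)$, substitute into the first, and obtain $g(x)\ge 0$ for a downward-opening quadratic $g$; with $\alpha:=\sqrt a(1+bP_1)$ and $\beta:=\sqrt b(1+aP_2)$ its discriminant works out to $(1-\alpha^2+\beta^2)^2-4\beta^2=\big((\beta-1)^2-\alpha^2\big)\big((\beta+1)^2-\alpha^2\big)$, which is nonnegative precisely when $\alpha+\beta\le 1$, i.e.\ precisely under (\ref{eq:power}). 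A short check that the roots of $g$ are positive and lie below $(1-\alpha^2)/b<1/b$ then shows that taking $x$ to be a root of $g$ yields $1-bx>0$, an admissible $y=(1+bP_1)^2/(1-bx)>0$ with $1-ay>0$, and hence a feasible quadruple $(\rho_1,\rho_2,\sigma_1^2,\sigma_2^2)$; for it (\ref{eq:constraint1}) at $\mu=1$ reads exactly $R_1+R_2\le C$, which with achievability proves the theorem. The main obstacle is this last feasibility/discriminant step -- verifying that the genie parameters that close the bound exist \emph{if and only if} (\ref{eq:power}) holds, and that they simultaneously place $(\sigma_1^2,\sigma_2^2)$ in $\Sigma$ with $P_i^\ast=P_i$; everything else is routine algebra. (The degenerate cases $a=0$ or $b=0$ can be handled by letting $a\to0^+$ or $b\to0^+$.)
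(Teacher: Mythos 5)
Your proposal is correct and follows essentially the same route as the paper: the converse is (\ref{eq:constraint1}) at $\mu=1$ with genie parameters chosen so that $\rho_1\sigma_1=1+aP_2$, $\rho_2\sigma_2=1+bP_1$ and the feasibility constraints $1-\rho_1^2= a\sigma_2^2$, $1-\rho_2^2= b\sigma_1^2$ hold with equality, which is exactly the paper's choice (\ref{eq:condition1})--(\ref{eq:condition4}), whose closed form is the root of your quadratic $g$ and whose discriminant condition is (\ref{eq:power}); your completing-the-square step is the computation the paper records in Remark 7. The only additions you make are explicit verifications (that (\ref{eq:power}) forces $a,b<1$ and that the roots of $g$ keep $1-bx>0$ and $1-ay>0$) that the paper leaves implicit.
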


\begin{proof}By choosing \bqa
&&\hspace{-.25in}\sigma_1^2=\frac{1}{2b}\left\{b(aP_2+1)^2-a(bP_1+1)^2+1\right.\nn\\
&&\left.\pm\sqrt{\left[b(aP_2+1)^2-a(bP_1+1)^2+1\right]^2-4b(aP_2+1)^2}\right\}\label{eq:condition1}\nn\\
&&\hspace{-.25in}\sigma_2^2=\frac{1}{2a}\left\{a(bP_1+1)^2-b(aP_2+1)^2+1\right.\nn\\
&&\left.\pm\sqrt{\left[a(bP_1+1)^2-b(aP_2+1)^2+1\right]^2-4a(bP_1\emph{}+1)^2}\right\}\nn\\
&&\hspace{-.25in}\rho_1=\sqrt{1-a\sigma_2^2}\label{eq:condition3}\\
&&\hspace{-.25in}\rho_2=\sqrt{1-b\sigma_1^2},
\label{eq:condition4}\eqa the bound (\ref{eq:constraint1}) with
$\mu=1$ is \bqa
R_1+R_2\leq\frac{1}{2}\log\left(1+\frac{P_1}{1+aP_2}\right)+\frac{1}{2}\log\left(1+\frac{P_2}{1+bP_1}\right).\label{eq:upperlower}\eqa
But one can achieve equality in (\ref{eq:upperlower}) by treating
the interference as noise at both receivers. In order that the
choices of $\sigma_i^2$ and $\rho_i^2$ are feasible,
(\ref{eq:power}) must be satisfied.
\end{proof}

Remark 7: Consider the bound (\ref{eq:constraint1}) with $\mu=1$,
we further let \bqa 1-\rho_1^2\geq a\sigma_2^2,\quad
1-\rho_2^2\geq b\sigma_1^2.\label{eq:conditionalPower}\eqa From
(\ref{eq:P1muL}) and (\ref{eq:P2muL}) we have
$P_1^*=P_1,P_2^*=P_2$. Thus,
 \bqa &&\hspace{-.2in}R_1\leq\frac{1}{2}\log\left(1+\frac{P_1}{\sigma_1^2}\right)-\frac{1}{2}\log(aP_2+1-\rho_1^2)\nn\\
&&\hspace{.2in}+\frac{1}{2}\log\left[1+aP_2+P_1-\frac{(P_1+\rho_1\sigma_1)^2}{P_1+\sigma_1^2}\right]\nn\\
&&=\frac{1}{2}\log\left[\frac{P_1(1+aP_2)}{1+aP_2-\rho_1^2}\left(\frac{1}{\sigma_1}-\frac{\rho_1}{1+aP_2}\right)^2+1+\frac{P_1}{1+aP_2}\right]\nn\\
&&\triangleq f(\rho_1,\sigma_1).\label{eq:symmsum}\eqa Therefore,
for any given $\rho_1$, when
\bqa\rho_1\sigma_1=1+aP_2,\label{eq:sigma}\eqa then
$f(\rho_1,\sigma_1)$ achieves its minimum which is user $1$'s
single-user detection rate. Similarly, we have
$\rho_2\sigma_2=1+bP_1$. Since the constraint in
(\ref{eq:conditionalPower}) must be satisfied, we have \bqa
\frac{1+aP_2}{\rho_1}\leq\sqrt{\frac{1-\rho_2^2}{b}},\quad
\frac{1+bP_1}{\rho_2}\leq\sqrt{\frac{1-\rho_1^2}{a}}.\label{eq:rhosigma}\eqa
As long as there exists a $\rho_i\in(0,1)$ such that
(\ref{eq:rhosigma}) is satisfied, we can choose $\sigma_i$ to
satisfy (\ref{eq:sigma}) and hence the bound in
(\ref{eq:constraint1}) is tight. By cancelling $\rho_1,\rho_2$, we
obtain (\ref{eq:power}).

Remark 8: The most special choices of $\rho_1,\rho_2$ are in
(\ref{eq:condition3}) and (\ref{eq:condition4}),  since
(\ref{eq:firstCombine}) and (\ref{eq:secondCombine}) with $\mu=1$
become \bqn h\left(X_1^n+N_1^n\right)-
h\left(\sqrt{b}X_1^n+Z_2^n|N_2^n\right)&=&-n\log\sqrt{b}\nn\\
h\left(X_2^n+N_2^n\right)-h\left(\sqrt{a}X_2^n+Z_1^n|N_1^n\right)&=&-n\log\sqrt{a}.\eqn
Therefore, we do not need the extremal inequality
\cite{Liu&Viswanath:06IT} to prove Theorem \ref{thm:sumcapacity}.

Remark 9: The sum-rate capacity for a Z-IC with $a=0$, $0<b<1$ is
a special case of Theorem \ref{thm:sumcapacity} since
(\ref{eq:power}) is satisfied. The sum-rate capacity is therefore
given by (\ref{eq:sumcapacity}).

Remark 10: Theorem \ref{thm:sumcapacity} follows directly from
Theorem \ref{thm:region} with $\mu=1$. It is remarkable that a
genie-aided bound is tight if (\ref{eq:power}) is satisfied since
the genie provides extra signals to the receivers without
increasing the rates. This situation is reminiscent of the recent
capacity results for vector Gaussian broadcast channels (see
\cite{Weingarten-etal:06IT}). Furthermore, the sum-rate capacity
(\ref{eq:sumcapacity}) is achieved by treating the interference as
noise. We therefore refer to channels satisfying (\ref{eq:power})
as ICs with {\em noisy interference}.

Remark 11: For a symmetric IC where $a=b,P_1=P_2=P$, the
constraint (\ref{eq:power}) implies that \bqa a\leq
\frac{1}{4},\quad
P\leq\frac{\sqrt{a}-2a}{2a^2}.\label{eq:channelgain}\eqa {\em
Noisy interference} is therefore {\em weaker} than {\em weak
interference} as defined in \cite{Costa:85IT} and
\cite{Sason:04IT}, namely $a\leq\frac{\sqrt{1+2P}-1}{2P}$ or \bqa
a\leq \frac{1}{2},\quad P\leq\frac{1-2a}{2a^2}.\label{eq:weak}\eqa
Recall that \cite{Sason:04IT} showed that for ``weak interference"
satisfying (\ref{eq:weak}), treating interference as noise
achieves larger sum rate than time- or frequency-division
multiplexing (TDM/FDM), and \cite{Costa:85IT} claimed that in
``weak interference" the largest known achievable sum rate is
achieved by treating the interference as noise.

\subsection{Capacity region corner point}
The bounds (\ref{eq:constraint2}) and (\ref{eq:constraint3}) of
Theorem \ref{thm:region} lead to the following sum-rate capacity
result.
\begin{theorem}
For an IC$(a,b,P_1,P_2)$ with $a>1$, $0<b<1$, the sum-rate
capacity is \bqa
C=\frac{1}{2}\log\left(1+P_1\right)+\frac{1}{2}\log\left(1+\frac{P_2}{1+bP_1}\right)\label{eq:sumcapacity_mixed}\eqa
when the following condition holds \bqa (1-ab)P_1\leq
a-1.\label{eq:constraint_mixed}\eqa A similar result follows by
swapping $a$ and $b$, and $P_1$ and $P_2$.
\label{thm:sumcapacity_mixed}\end{theorem}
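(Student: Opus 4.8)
The plan is to prove (\ref{eq:sumcapacity_mixed}) by exhibiting a matching achievable scheme and converse, the extremal operating point being the corner point $\left(R_1,R_2\right)=\left(\tfrac12\log(1+P_1),\ \tfrac12\log\!\big(1+\tfrac{P_2}{1+bP_1}\big)\right)$, whose coordinates sum to the claimed $C$. As a preliminary remark, (\ref{eq:constraint_mixed}) together with $0<b<1$ and $P_1>0$ already implies $a>1$, so the content of the theorem is really the power--crosstalk tradeoff (\ref{eq:constraint_mixed}).

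\emph{Achievability.} I would use a simple superposition scheme in which transmitter $2$'s whole message is ``common'' (to be decoded at both receivers) while transmitter $1$'s message is private, both transmitters using i.i.d.\ Gaussian codebooks at full power $P_1,P_2$. Receiver $2$ decodes $W_2$ from $Y_2$ treating $\sqrt b X_1$ as noise, which is reliable provided $R_2\le\tfrac12\log\!\big(1+\tfrac{P_2}{1+bP_1}\big)$. Receiver $1$ first decodes $W_2$ from $Y_1$ treating $X_1$ as noise (reliable if $R_2\le\tfrac12\log\!\big(1+\tfrac{aP_2}{1+P_1}\big)$), subtracts $\sqrt a X_2^n$, and decodes $W_1$ over the resulting interference-free channel (reliable if $R_1\le\tfrac12\log(1+P_1)$). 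Hence the corner point above is achievable exactly when $\tfrac12\log\!\big(1+\tfrac{P_2}{1+bP_1}\big)\le\tfrac12\log\!\big(1+\tfrac{aP_2}{1+P_1}\big)$, i.e.\ when $1+P_1\le a(1+bP_1)$, which is precisely (\ref{eq:constraint_mixed}).

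\emph{Converse.} I would use the outer bound (\ref{eq:constraint2}) of Theorem \ref{thm:region}. Although that theorem is stated for $0<a<1$, the derivation of (\ref{eq:constraint2}) --- the genie hands $X_2^n$ to receiver one, then $n(R_1+\eta_1R_2)\le I(X_1^n;Y_1^n,X_2^n)+\eta_1 I(X_2^n;Y_2^n)$ is followed by the extremal inequality --- uses only $0<b<1$, so (\ref{eq:constraint2}) remains valid when $a>1$. Evaluating it at $\eta_1=\frac{1+bP_1}{b+bP_1}$, where $\tilde P_1=P_1$ and there is no power sharing (cf.\ Remark 2), gives $R_1+\eta_1 R_2\le\tfrac12\log(1+P_1)+\eta_1\cdot\tfrac12\log\!\big(1+\tfrac{P_2}{1+bP_1}\big)$. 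Combining this with the elementary bound $R_1\le\tfrac12\log(1+P_1)$ (again give $X_2^n$ to receiver one, so $nR_1\le I(X_1^n;Y_1^n\mid X_2^n)=h(X_1^n+Z_1^n)-h(Z_1^n)$) and using $\eta_1>1$ to eliminate $R_1$ and $R_2$, one obtains $R_1+R_2\le C$. Equivalently, handing $X_2^n$ to receiver one turns the channel into the degraded Z-IC$(0,b,P_1,P_2)$, whose sum-rate capacity is $C$ by Theorem \ref{thm:sumcapacity} (cf.\ Remark 9); either way the converse matches the achievable corner point. The swapped statement follows verbatim with $(a,P_1)$ and $(b,P_2)$ interchanged, now invoking (\ref{eq:constraint3}).

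The Gaussian entropy computations and the final algebraic elimination in the converse are routine. The step needing the most care is the exact match: one must verify that under (\ref{eq:constraint_mixed}) the constraint ``receiver one can decode $W_2$'' (including the MAC sum constraint $R_1+R_2\le\tfrac12\log(1+P_1+aP_2)$) is implied by ``receiver two can decode $W_2$,'' so that the achievable corner point lies on the face of the outer-bound region cut out by (\ref{eq:constraint2}) at $\eta_1=\frac{1+bP_1}{b+bP_1}$ and by $R_1=\tfrac12\log(1+P_1)$; and, on the converse side, one should double-check that (\ref{eq:constraint2}) (resp.\ (\ref{eq:constraint3}) for the swapped statement) genuinely carries over to the strong-interference regime $a>1$ (resp.\ $b>1$).
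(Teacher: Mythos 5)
Your proposal is correct and follows essentially the same route the paper indicates: the converse comes from the Z-IC/degraded-channel bound (\ref{eq:constraint2}) evaluated at $\eta_1=\frac{1+bP_1}{b+bP_1}$ (which, as you note, never uses $a$ and so survives $a>1$), combined with the single-user bound on $R_1$, while achievability is the paper's stated scheme of user $1$ at full rate and user $2$ decodable at both receivers by single-user detection, with (\ref{eq:constraint_mixed}) exactly the condition that receiver one's decoding of $W_2$ is not the bottleneck. Your closing worry about the MAC sum constraint is moot since your successive-decoding order already sits at the relevant corner of receiver one's MAC region.
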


This sum-rate capacity is achieved by a simple scheme: user $1$
transmits at the maximum rate and user $2$ transmits at the rate
that both receivers can decode its message with single-user
detection. Such a rate constraint was considered in \cite[Theorem
1]{Costa:85IT} which established a corner point of the capacity
region. However it was pointed out in \cite{Sason:04IT} that the
proof in \cite{Costa:85IT} was flawed. Theorem
\ref{thm:sumcapacity_mixed} shows that the rate pair of
\cite{Sason:04IT} is in fact a corner point of the capacity region
when $a>1, 0<b<1$ and (\ref{eq:constraint_mixed}) is satisfied,
and this rate pair achieves the sum-rate capacity.

 The sum-rate
capacity of the degraded IC $(ab=1,0<b<1)$ is a special case of
Theorem \ref{thm:sumcapacity_mixed}. Besides this example, there
are two other kinds of ICs to which Theorem
\ref{thm:sumcapacity_mixed} applies. The first case is $ab>1$. In
this case, $P_1$ can be any positive value. The second case is
$ab<1$ and $P_1\leq\frac{a-1}{1-ab}$. For both cases, the signals
from user $2$ can be decoded first at both receivers.

\section{Sum-rate Capacity for $m$-user IC with Noisy Interference}
For an $m$-user IC, the receive signal at user $i$ is defined as
\bqa Y_i=X_i+\sum_{j=1,j\neq
i}^m\left(\sqrt{c_{ji}}X_j\right)+Z_i,\quad
i=1,2,\dots,m,\label{eq:mIC}\eqa where $c_{ji}$ is the channel
gain from $j^{th}$ transmitter to $i^{th}$ receiver, $Z_i$ is
unit-variance Gaussian noise, and the transmit signals have the
block power constraints $\sum_{l=1}^n\Emat(X_{il}^2)\leq nP_i$. We
have the following sum-rate capacity result.
\begin{theorem}
For an $m$-user IC defined in (\ref{eq:mIC}), if there exist
$\rho_i\in(0,1),i=1,\dots,m$, such that the following conditions
are satisfied \bqa \sum_{j=1,j\neq i}^m\frac{c_{ji}(1+Q_j)^2}{\rho_j^2}&\leq&1-\rho_i^2\label{eq:mICcondition1}\\
 \sum_{j=1,j\neq i}^m\frac{c_{ij}}{1+Q_j-\rho_j^2}&\leq&\frac{1}{P_i+(1+Q_i)^2/\rho_i^2}\label{eq:mICcondition2},\eqa
where $Q_i$ is the interference power at receiver $i$, defined as
\bqa Q_i=\sum_{j=1,j\neq i}^mc_{ji}P_j,\eqa the sum-rate capacity
is \bqa
C=\frac{1}{2}\sum_{i=1}^m\log\left(1+\frac{P_i}{1+Q_i}\right)\label{eq:mICsumcapacity}\eqa\label{thm:mIC}
\end{theorem}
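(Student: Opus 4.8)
The plan is to run a genie-aided converse in the spirit of Theorem~\ref{thm:sumcapacity}, but with the exact entropy cancellation of Remark~8 replaced by an entropy-power-inequality (EPI) step whose slack is controlled precisely by (\ref{eq:mICcondition1})--(\ref{eq:mICcondition2}). Achievability is routine: with i.i.d.\ Gaussian codebooks and each receiver treating all interference as noise, the rates $\frac{1}{2}\log\!\left(1+P_i/(1+Q_i)\right)$ are simultaneously achievable, so the right side of (\ref{eq:mICsumcapacity}) is achievable. For the converse, I would let a genie give receiver~$i$ the signal $S_i^n=X_i^n+N_i^n$, where $N_i^n$ is i.i.d.\ zero-mean Gaussian of variance $\sigma_i^2$ with $\Emat(N_{il}Z_{il})=\rho_i\sigma_i$, the $N_i^n$ mutually independent and independent of $(X_1^n,\dots,X_m^n)$; I fix the $\rho_i\in(0,1)$ as in the hypothesis and choose $\sigma_i=(1+Q_i)/\rho_i$, the $m$-user version of the choice $\rho_i\sigma_i=1+Q_i$ that made the single-user bound tight in Remark~7. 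With $\tilde Y_i^n:=\sum_{j\neq i}\sqrt{c_{ji}}X_j^n+Z_i^n$, Fano's inequality together with $I(X_i^n;Y_i^n)\le I(X_i^n;Y_i^n,S_i^n)$ and the expansion from the proof of Theorem~\ref{thm:region} give
\[
n\sum_{i=1}^m R_i-n\epsilon_n\le\sum_{i=1}^m\Big[h(S_i^n)-h(N_i^n)+h(Y_i^n\mid S_i^n)-h(\tilde Y_i^n\mid N_i^n)\Big].
\]

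The term $h(Y_i^n\mid S_i^n)$ is handled exactly as in (\ref{eq:eq1}): it is at most $\frac{n}{2}\log\!\left[2\pi e\,\beta_i\right]$ with $\beta_i=1+Q_i+P_i-(P_i+\rho_i\sigma_i)^2/(P_i+\sigma_i^2)$, with equality for i.i.d.\ Gaussian inputs at full power; I take this step as routine. The crux is the combination $\sum_i[h(S_i^n)-h(\tilde Y_i^n\mid N_i^n)]$. With $\sigma_j^2=(1+Q_j)^2/\rho_j^2$, condition (\ref{eq:mICcondition1}) reads $\sum_{j\neq i}c_{ji}\sigma_j^2\le 1-\rho_i^2$, which is exactly what is needed so that, given $N_i^n$, the noise $Z_i^n$ (i.i.d.\ Gaussian of variance $1-\rho_i^2$) equals in distribution $\sum_{j\neq i}\sqrt{c_{ji}}W_j^{(i)n}+\hat Z_i^n$, where $W_j^{(i)n}\stackrel{d}{=}N_j^n$ are mutually independent, independent of everything, and $\hat Z_i^n$ is i.i.d.\ Gaussian of nonnegative variance $A_i:=1-\rho_i^2-\sum_{j\neq i}c_{ji}\sigma_j^2$. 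Hence $\tilde Y_i^n\mid N_i^n\stackrel{d}{=}\sum_{j\neq i}\sqrt{c_{ji}}(X_j^n+W_j^{(i)n})+\hat Z_i^n$, and the EPI applied to these independent summands gives $h(\tilde Y_i^n\mid N_i^n)\ge\frac{n}{2}\log\!\left[2\pi e\left(\sum_{j\neq i}c_{ji}v_j+A_i\right)\right]$, where $v_j:=(2\pi e)^{-1}\exp\!\left(2h(S_j^n)/n\right)$ is the entropy power of $S_j^n$, satisfying $v_j\le P_j+\sigma_j^2=:v_j^\ast$ because Gaussian maximizes entropy under a covariance constraint. Writing $L_i(v):=\sum_{j\neq i}c_{ji}v_j+A_i$, the chain of bounds so far becomes $\sum_i R_i\le\frac{1}{2}\sum_i\log\!\left(\beta_i v_i/(\sigma_i^2 L_i(v))\right)+\epsilon_n$, so after $n\to\infty$ it remains to maximize $F(v):=\sum_i\log v_i-\sum_i\log L_i(v)$ over $v_i\le v_i^\ast$.

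The last step, where (\ref{eq:mICcondition2}) enters, is to show $F(v)\le F(v^\ast)$. At $v=v^\ast$ one has $L_i(v^\ast)=\sum_{j\neq i}c_{ji}P_j+1-\rho_i^2=1+Q_i-\rho_i^2$, and $L_i(v)/L_i(v^\ast)$ is a weighted average --- with weights $c_{ji}v_j^\ast$ ($j\neq i$) and $A_i$ --- of the numbers $v_j/v_j^\ast\le 1$ and $1$; concavity of $\log$ then gives $\log\!\left(L_i(v)/L_i(v^\ast)\right)\ge(1+Q_i-\rho_i^2)^{-1}\sum_{j\neq i}c_{ji}v_j^\ast\log(v_j/v_j^\ast)$. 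Summing over $i$ and exchanging the order of summation,
\[
F(v)-F(v^\ast)\le\sum_{j=1}^m\log\!\left(\frac{v_j}{v_j^\ast}\right)\left(1-v_j^\ast\sum_{i\neq j}\frac{c_{ji}}{1+Q_i-\rho_i^2}\right);
\]
since $v_j^\ast=P_j+(1+Q_j)^2/\rho_j^2$, each bracket is nonnegative exactly by (\ref{eq:mICcondition2}) while $\log(v_j/v_j^\ast)\le 0$, so $F(v)\le F(v^\ast)$. Evaluating the bound at $v=v^\ast$ and substituting $\sigma_i^2=(1+Q_i)^2/\rho_i^2$, the $\rho_i$ cancel and $\beta_i v_i^\ast/(\sigma_i^2 L_i(v^\ast))$ simplifies to $1+P_i/(1+Q_i)$, so $\sum_i R_i\le\frac{1}{2}\sum_i\log\!\left(1+P_i/(1+Q_i)\right)=C$, matching (\ref{eq:mICsumcapacity}); together with achievability this proves the theorem.

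I expect the main obstacle to be these last two steps: one has to realize that the EPI must be applied with the residual noise $\hat Z_i^n$ retained, so that the term $A_i$ appears and the corner value comes out exactly $C$, and that the leftover optimization over the entropy powers $v_i$ is a difference of logarithms whose maximum over the box lies at the all-maximum-power corner precisely when (\ref{eq:mICcondition2}) holds --- the Jensen step being what turns that cornerwise optimality into the stated inequality. Condition (\ref{eq:mICcondition1}) plays the subsidiary but essential role of making the noise splitting, and hence the EPI bound, legitimate; for $m=2$ it coincides with the feasibility condition (\ref{eq:rhosigma}), which also makes (\ref{eq:mICcondition2}) hold, so no separate condition appeared in Theorem~\ref{thm:sumcapacity}.
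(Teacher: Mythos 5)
Your proof is correct, and since the paper states only that ``the proof is omitted due to the space limitation,'' there is no in-paper argument for Theorem~\ref{thm:mIC} to compare against; the natural benchmark is the paper's own two-user machinery (Theorem~\ref{thm:region}, Theorem~\ref{thm:sumcapacity}, Remarks~7--8). Relative to that template you do two things differently, both forced by $m>2$ and both handled correctly. First, in the two-user case the magic choice $\rho_i\sigma_i=1+Q_i$ makes the difference $h(X_1^n+N_1^n)-h(\sqrt{b}X_1^n+Z_2^n\mid N_2^n)$ exactly constant (Remark~8), so neither the extremal inequality nor any optimization is needed; with $m-1$ interferers per receiver this exact cancellation is unavailable, and your replacement --- splitting $Z_i^n\mid N_i^n$ into $\sum_{j\neq i}\sqrt{c_{ji}}W_j^{(i)n}+\hat Z_i^n$ (legitimate precisely when (\ref{eq:mICcondition1}) holds, since it is $\sum_{j\neq i}c_{ji}\sigma_j^2\le 1-\rho_i^2$ with $\sigma_j=(1+Q_j)/\rho_j$) and then applying the EPI --- is sound because the summands $X_j^n+W_j^{(i)n}$ are mutually independent and equal in distribution to $S_j^n$. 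Second, because the resulting bound couples the positive terms $h(S_j^n)$ with the negative terms $h(\tilde Y_i^n\mid N_i^n)$ through the entropy powers $v_j$, you correctly refuse to bound them separately and instead maximize $F(v)=\sum_i\log v_i-\sum_i\log L_i(v)$ over the box $v_j\le v_j^\ast=P_j+\sigma_j^2$; your Jensen/exchange-of-summation step showing the maximum sits at the corner $v=v^\ast$ is exactly where (\ref{eq:mICcondition2}) enters, and I have checked that the bracket condition you obtain is (\ref{eq:mICcondition2}) after relabeling indices, that $L_i(v^\ast)=1+Q_i-\rho_i^2$, and that $\beta_iv_i^\ast/(\sigma_i^2L_i(v^\ast))$ collapses to $1+P_i/(1+Q_i)$. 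Your closing observation that for $m=2$ condition (\ref{eq:mICcondition2}) for user $1$ coincides with condition (\ref{eq:mICcondition1}) for user $2$ (hence only (\ref{eq:power}) survives in Theorem~\ref{thm:sumcapacity}) is also correct and explains why the $m$-user statement needs two families of constraints. The only steps you wave through --- the worst-case Gaussian bound (\ref{eq:eq1}) with full power and the per-letter-to-block concavity argument for $v_j\le P_j+\sigma_j^2$ --- are genuinely routine and are already used in the paper's two-user proof, so I see no gap.
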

Therefore, if there exist $\rho_1,\dots,\rho_m$, such that
(\ref{eq:mICcondition1}) and (\ref{eq:mICcondition2}) are
satisfied for all $i=1,\dots,m$, the sum-rate capacity of an
$m$-user IC can be achieved by treating interference as noise. The
proof is omitted due to the space limitation. It can be shown that
Theorem \ref{thm:sumcapacity} is a special case of Theorem
\ref{thm:mIC}.

Consider a uniformly symmetric $m$-user IC where $c_{ji}=c$, for
all $i,j=1,\dots,m, i\neq j$, and $P_i=P$. The bounds
(\ref{eq:mICcondition1}) and (\ref{eq:mICcondition2}) with
$\rho_i=\rho$ for all $i$ reduce to \bqa
c\leq\frac{1}{4(m-1)},\quad P\leq
\frac{\sqrt{(m-1)c}-2(m-1)c}{2(m-1)^2c^2}.\label{eq:symmNoisyICPower}\eqa

\section{Numerical examples}

A comparison of the outer bounds for a Gaussian IC is given in
Fig. \ref{fig:region}. Some part of the outer bound from Theorem
\ref{thm:region} overlaps with Kramer's outer bound due to
(\ref{eq:constraint2}) and (\ref{eq:constraint3}). Since this IC
has noisy interference, the proposed outer bound coincides with
the inner bound at the sum rate point.

The lower and upper bounds for the sum-rate capacity of the
symmetric IC are shown in Fig \ref{fig:sumVeryStrong} for high
power level. The upper bound is tight up to point $A$. The bound
in \cite[Theorem 3]{Etkin-etal:07IT_submission} approaches the
bound in Theorem \ref{thm:region} when the power is large, but
there is still a gap. Fig. \ref{fig:sumVeryStrong} also provides a
definitive answer to a question from \cite[Fig. 2]{Sason:04IT}:
whether the sum-rate capacity of symmetric Gaussian IC is a
decreasing function of $a$, or there exists a bump like the lower
bound when $a$ varies from $0$ to $1$. In Fig.
\ref{fig:sumVeryStrong}, our proposed upper bound and Sason's
inner bound explicitly show that the sum capacity is not a
monotone function of $a$ (this result also follows by the bounds
of \cite{Etkin-etal:07IT_submission}).

\section{Conclusions, extensions and parallel work}

We derived an outer bound for the capacity region of $2$-user
Gaussian ICs by a genie-aided method. From this outer bound, the
sum-rate capacities for ICs that satisfy (\ref{eq:power}) or
(\ref{eq:constraint_mixed}) are obtained. The sum-rate capacity
for $m$-user Gaussian ICs  that satisfy (\ref{eq:mICcondition1})
and (\ref{eq:mICcondition2}) are also obtained.

Finally, we wish to acknowledge parallel work. After submitting
our $2$-user bounds and capacity results on the arXiv.org e-Print
archive \cite{Shang-etal:06IT_submission}, two other teams of
researchers - Motahari and Khandani from the University of
Waterloo, Annapureddy and Veeravalli from the University of
Illinois at Urbana-Champaign - let us know that they derived the
same $2$-user sum-rate capacity results (Theorem
\ref{thm:sumcapacity}).

\vspace{.2in}

 \centerline{Acknowledgement} The work of X. Shang and B. Chen
was supported in part by the NSF under Grants 0546491 and 0501534,
and by the AFOSR under Grant FA9550-06-1-0051, and by the AFRL
under Agreement FA8750-05-2-0121.

G. Kramer gratefully acknowledges the support of the Board of
Trustees of the University of Illinois Subaward no. 04-217 under
NSF Grant CCR-0325673 and the Army Research Office under ARO Grant
W911NF-06-1-0182.

\begin{figure}[htp]
\centerline{\leavevmode \epsfxsize=3.3in \epsfysize=2.2in
\epsfbox{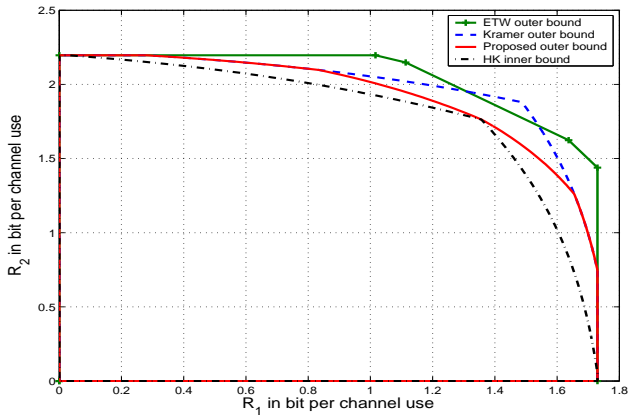}}\caption{Inner and outer bounds for the
capacity region of Gaussian ICs with
$a=0.04,b=0.09,P_1=10,P_2=20$. The ETW bound is by Etkin, Tse and
Wang in \protect\cite[Theorem 3]{Etkin-etal:07IT_submission}; the
Kramer bound is from \protect\cite[Theorem 2]{Kramer:04IT}; the HK
inner bound is based on \protect\cite{Han&Kobayashi:81IT} by Han
and Kobayashi.} \label{fig:region}\end{figure}

\begin{figure}[htp]
\centerline{\leavevmode \epsfxsize=3.5in \epsfysize=2.4in
\epsfbox{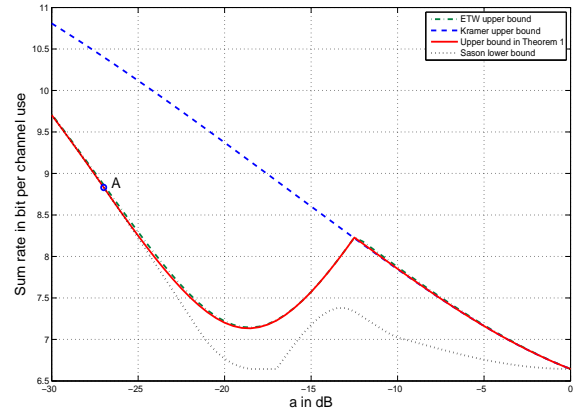}}\caption{Lower and upper bounds for the
sum-rate capacity of symmetric Gaussian ICs with
$a=b,P_1=P_2=5000$. The channel gain at point $A$ is
$a=-26.99$dB. Sason's bound is an inner bound obtained from Han
and Kobayashi's bound by a special time sharing scheme
\protect\cite[Table I]{Sason:04IT}.}
\label{fig:sumVeryStrong}\end{figure}

\bibliography{Journal,Conf}
\bibliographystyle{IEEEbib}

\end{document}